\renewcommand*\env@matrix[1][*\c@MaxMatrixCols c]{%
  \hskip -\arraycolsep
  \let\@ifnextchar\new@ifnextchar
  \array{#1}} 
\theoremstyle{plain}
\newtheorem{theorem}{Theorem}
\theoremstyle{nonumberplain}
\theoremstyle{nonumberplain}
\newtheorem{remark}{Remark}
\theoremstyle{plain}
\theoremstyle{nonumberplain}
\newtheorem{proof}{Proof}
\newlength\fheight
\newlength\fwidth
\def\Hx{ \widehat{x} }
\def\Hy{ \widehat{y} }
\def\I{ \mathrm{i} }							% imaginary unit
\def\E{ \mathrm{e} }							% e
\def\T{ \mathrm{T} }							% T - transpose
\def\RN{ \mathbb{R} }							% real numbers
\def\CN{ \mathbb{C} }							% complex numbers
\def\TT{ \mathbb{T} }                             	% subset of R
\def\uC{ \mathbb{T} }
\DeclareMathOperator*{\supp}{supp}			% support
\title{Fast Compressive Phase Retrieval from Fourier Measurements}
\name{
  \c{C}a\u{g}kan~Yapar,
  Volker~Pohl,  
  Holger~Boche\thanks{This work was partly supported by the German Research Foundation (DFG) under Grant PO~1347/2-1.}
}
\address{
  Lehrstuhl f{\"u}r Theoretische Informationstechnik\\
  Technische Universit{\"a}t M{\"u}nchen, 80290 M{\"u}nchen, Germany\\
  \{cagkan.yapar, volker.pohl, boche\}@tum.de
}
\begin{document}
%\ninept
%
\maketitle
%
%\circ
\begin{abstract}
This paper considers the problem of recovering a $k$-sparse, $N$-dimensional complex signal from Fourier magnitude measurements. It proposes a Fourier optics setup such that signal recovery up to a global phase factor is possible with very high probability whenever $M \gtrsim 4k\log_2(N/k)$ random Fourier intensity measurements are available. The proposed algorithm is comprised of two stages: An algebraic phase retrieval stage and a compressive sensing step subsequent to it.
Simulation results are provided to demonstrate the applicability of the algorithm for noiseless and noisy scenarios. 
\end{abstract}
\begin{keywords}
Phase retrieval, compressive sampling, Fourier measurements
\end{keywords}

% ================================================================================================================
% ========== INDRODUCTION ========================================================================================
% ================================================================================================================
\section{Introduction}
\label{sec:intro}

In many applications involving linear signal measurement processes, the measurement results are magnitude-only or solely unreliable phase information is available. 
Phase retrieval addresses this problem by striving to recover the signal
exclusively from the absolute values of the linear measurements. Fourier optics is one of the application areas where the phase retrieval problem is commonly faced. An exemplary setting is shown in Fig.~\ref{fig:Optics} (the mask belongs to the recovery setup, assume it to be nonexistent for the moment). The object of interest is illuminated by a light or x-ray source. As a result, a diffraction pattern $x[n]$ is produced, where $n$ denotes the discrete spatial coordinate. Subsequently, this diffraction pattern $x[n]$ is transformed by the lens into the Fourier domain. Unable to measure the phase, one can  only acquire the intensity measurements $|\widehat{x}[\omega]|^{2}$ of the Fourier transform $\widehat{x}[\omega]$. The phase retrieval problem is now to reconstruct the diffraction pattern $x[n]$ from the intensity measurements $|\widehat{x}[\omega]|^{2}$. In this paper, we are interested in the case where $x\in \CN^{N}$ is known to be $k$-sparse.% {\em i.e.}, at most $k$ entries of the diffraction pattern are non-zero.

Non-sparse phase retrieval has been a very active research area since the seminal work of Balan et al.~\cite{Balan_RecWithoutPhase_06}. It is proven in~\cite{Conca_Algebraic13} that $4N-4$ measurements are sufficient, and in~\cite{Heinosaari_QuantumTom_13} that $4N-o(N)$ measurements are necessary for perfect recovery up to a global phase factor. Minimal deterministic constructions yielding injectivity with $4N-4$ measurement vectors are provided in ~\cite{Bodmann_StablePR2014,Fickus_VeryFewMeasurements}. Also, explicit deterministic measurement ensembles ensuring injectivity for almost every signal in $\CN^N$ are proposed in~\cite{PYB_SampTa14,phasecode}.

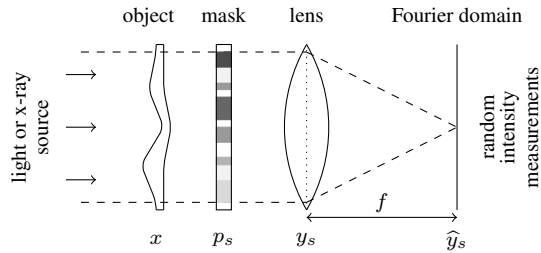
\begin{figure}[ht]
\begin{center}
\begin{tikzpicture}
% --- light source ---
	\draw (-3.8,0) node[rotate = 90] {\footnotesize light or x-ray};
	\draw (-3.5,0) node[rotate = 90] {\footnotesize source};
	\draw[->] (-3.2,-0.7) -- (-2.8,-0.7);
	\draw[->] (-3.2,0) -- (-2.8,0);
	\draw[->] (-3.2,0.7) -- (-2.8,0.7);
	% --- rays ---
	\draw[dashed] (-3,1) -- (0,1);
	\draw[dashed] (-3,-1) -- (0,-1);
	\draw[dashed] (0,1) -- (2,0);
	\draw[dashed] (0,-1) -- (2,0);
	% --- sample ---
	\draw (-2.1,1.46) node {\footnotesize object};
	\draw[rounded corners = 1mm] (-2, -1.1) -- (-2,-1) -- (-2.2,-0.5) -- (-1.9,0) -- (-2.1,0.5) -- (-2,1) -- (-2,1.1);
	\draw[rounded corners = 1mm] (-1.9, -1.1) -- (-1.9,-1) -- (-1.9,-0.5) -- (-1.8,0) -- (-1.9,0.5) -- (-1.9,1) -- (-1.9,1.1);
	\draw (-1.9,-1.1) -- (-2,-1.1);
	\draw (-1.9,1.1) -- (-2,1.1);
	\draw  (-2,-1.5) node {\footnotesize $x$};
	% --- mask ---
	\draw (-1.1,1.5) node {\footnotesize mask};
	\filldraw[fill = black!15!white, draw = black!15!white] (-1.2,-1) rectangle (-1,-0.7);
	\filldraw[fill = black!05!white, draw = black!05!white] (-1.2,-0.7) rectangle (-1,-0.4);
	\filldraw[fill = black!30!white, draw = black!30!white] (-1.2,-0.5) rectangle (-1,-0.4);
	\filldraw[fill = black!40!white, draw = black!40!white] (-1.2,-0.2) rectangle (-1,-0.0);
	\filldraw[fill = black!60!white, draw = black!60!white] (-1.2,0.1) rectangle (-1,0.4);
	\filldraw[fill = black!40!white, draw = black!40!white] (-1.2,0.5) rectangle (-1,0.6);
	\filldraw[fill = black!05!white, draw = black!05!white] (-1.2,0.6) rectangle (-1,0.8);
	\filldraw[fill = black!70!white, draw = black!70!white] (-1.2,0.8) rectangle (-1,1.0);
	\draw (-1.2,-1.1) rectangle (-1,1.1);
	\draw  (-1.1,-1.5) node {\footnotesize $p_{s}$};
	% --- Linse ---
	\draw  (0,1.5) node {\footnotesize lens};
	\draw (0,1.1) arc (150:210:2.2);
	\draw (0,-1.1) arc (-30:30:2.2);
	\draw[dotted] (0,-1.1) -- (0,1.1);
	\draw  (0,-1.5) node {\footnotesize $y_{s}$};
	% --- Image Plane ---
	\draw  (2,1.5) node {\footnotesize Fourier domain};
	\draw (2,1.1) -- (2,-1.1);
	\draw  (2,-1.5) node {\footnotesize $\Hy_{s}$};
	%\draw[dotted] (2,-1.3) -- (2,1.3);
	\draw[<->] (0,-1.2) -- (2,-1.2);
	\draw (1,-1) node {\footnotesize $f$};
	\draw (2.4,0) node[rotate = 90] {\footnotesize random};
	\draw (2.7,0) node[rotate = 90] {\footnotesize intensity};
	\draw (3.0,0) node[rotate = 90] {\footnotesize measurements};
\end{tikzpicture}
\end{center}
\caption{A setup for phase retrieval problem which can often be found in optical applications.}
\label{fig:Optics}
\end{figure}

%\begin{center}

%{\color{blue}end of my contribution (cagkan) \line(1,0){250}}
%\end{center}

Compressive phase retrieval of sparse signals attracted some interest in recent years.
It was shown in \cite{LiVoro_2012} that $8 k - 2$ generic intensity are sufficient for recovery whereas they needed $\mathcal{O}(k \ln N )$ measurements for stable recovery via convex programming.
The theoretical lower bound for the number of sufficient measurements required for a $k$-sparse signal was improved in \cite{akcakaya} to $M = 4k-2$. However, to the best of our knowledge there is no algorithm approaching this bound presently. Using {\em PhaseLift} \cite{Candes_PhaseLift},  Ohlsson et al.~\cite{CPRL} proposed an recovery algorithm from $\mathcal{O}(k^2 \log N)$ measurements. However, this technique is based on semidefinite programming and suffers from high computational complexity. In~\cite{prGAMP}, a technique relying on generalized approximate message passing is presented. While the simulation results in this work demonstrate some advantages in terms of the number of the required measurements and computational complexity,
no theoretical recovery guarantee was derived.

The characteristics of the measurement vectors play a key role in the practical applicability of the algorithms. None of the previously mentioned works focus on measurement sets that could model a Fourier optics system as in Fig.~\ref{fig:Optics} (see, e.g., \cite{BandChen_II14,Candes_CDP13,CandesEldar_PhaseRetrieval,Falldorf_SLM10,Gross_2014,Xiao_DistortedObject05,Zhang_ApatureMod07}).  %FFT computational comp. advantages could be also mentioned.
In fact, the first paper about compressive phase retrieval~\cite{Moravec} was addressing the very problem that we are trying to solve in the present paper, i.e., the recovery problem of a $k$-sparse complex signal $x \in \CN^N$ from Fourier intensity measurements  $|\widehat{x}[\omega]|^{2}$. To our knowledge, the only work after~\cite{Moravec} that directly addressed this problem is the recent paper by Pedarsani et al.~\cite{phasecode}. Based on a sparse graph codes framework, this paper provides a low complexity algorithm that achieves perfect reconstruction with very high probability using $14k$ measurements. 

The present paper proposes a two step procedure to recover almost every $x \in \CN^N$ by random Fourier intensity measurements using 4 masks (see Fig.~\ref{fig:Optics}). First, we recover the phases of our measurements up to a global phase  using the algorithm proposed in~\cite{PYB_SampTa14}. Afterward, the sparse signal is reconstructed using the standard compressed sensing approach, i.e., the $\ell_{1}$-minimization technique. We provide numerical simulations to show the success rates of the algorithm and its behavior under additive measurement noise.

% ================================================================================================================
% ========== Signal Model ========================================================================================
% ================================================================================================================
\section{Signal Model and Notations}
\label{sec:SignalModel}

% --------------------
\paragraph*{Notations}
We consider signals in the $N$-dimensional complex Euclidean vector space $\CN^{N}$.
These signals are written as $x = (x[1] , x[2], \dots, x[N])^{\T}$.
The inner product in $\CN^{N}$ is $\left\langle x,y \right\rangle_{\CN^{N}} = \sum^{N}_{n=1} x[n]\, \overline{y[n]} = y^{*}x$ where the bar denotes the complex conjugate, and $y^{*}$ is the conjugate transpose of $y$. The norm, induced by the inner product is denoted by $\|x\| = \sqrt{\left\langle x,x\right\rangle}$, whereas $\|x\|_{\ell_{1}} := \sum^{N}_{n=1}|x[n]|$ stands for the $\ell_{1}$ norm.
The \emph{unitary discrete Fourier transform (DFT)} of $x\in\CN^{N}$ is given by
$\Hx = F x$ where $F$ denotes the DFT matrix with entries
\begin{equation*}
	[F]_{m,n}=\tfrac{1}{\sqrt{N}}\E^{-i2\pi(m-1)(n-1)/N}\;,\quad
	m,n=1,\dots,N
\end{equation*}
in its $m$th row and $n$th column.

A vector $x \in \CN^{N}$ is called $k$-sparse if $\|x\|_{0} := |\supp(x)| \leq k$ where $\supp(x) = \{n : x[n] \neq 0\}$, 
i.e., if the support length is at most equal to $k$. The set of all $k$-sparse vectors in $\CN^{N}$ is denoted by
\begin{equation*}
	\Sigma^{N}_{k} = \{ x \in \CN^{N} : \|x\|_{0} \leq k\}\;.
\end{equation*}
We write $x\odot y$ for the point-wise product of two vectors $x,y \in \CN^{N}$, i.e., $(x\odot y)[n] = x[n] y[n]$ for all $n = 1,\dots,N$, and 
$\uC = \{z \in \CN : |z| = 1\}$ stands for the unit circle in the complex plane $\CN$.

\paragraph*{Problem Statement}
Let $x \in \Sigma^{N}_{k}$ and let $\{ \varphi_{m} \}^{M}_{m=1}$ be a set of measurement vectors in $\CN^{N}$.
The compressive phase retrieval (CPR) problem is to reconstruct $x$ from the intensity measurements
\begin{equation}
\label{equ:measurements}
	b_{m} = \left|\left\langle x , \varphi_{m}\right\rangle\right|^{2}\,,
	\quad m = 1,\dots,M\;.
\end{equation}
Recovery will only be unique up to a unitary constant because if $x$ satisfies \eqref{equ:measurements} then also and $c x$ with $|c|=1$ will satisfy \eqref{equ:measurements}.
Consequently, we always consider \eqref{equ:measurements} as a mapping $\mathcal{A}_{\Phi} : \CN^{N}\backslash\TT \to \RN^{M}$ from the quotient space of $\CN^{N}$ modulo $\TT$ into $\RN^{M}$.

% --------------------

% ================================================================================================================
% ========== Main Result ==============================================================================
% ================================================================================================================
\section{General Approach}

This section proposes a general approach for compressive phase retrieval problem. Thereafter, we will give some concrete realizations applicable to Fourier optics systems such as Fig.~\ref{fig:Optics}.
%If $x\in\CN^{N}$ is not sparse it is now well known that $4 N - 4$ measurements of the form \eqref{equ:measurements} are necessary and sufficient to recovery every $x \in \CN^{N}$ from these measurements, up to a constant phase.
%\subsection{Signal Reconstruction for the General Case}
We propose to split the whole recovery problem into a two step procedure: a phase retrieval step and a sparse recovery step.
More precisely, our methodology is based on the following two ingredients.
\begin{itemize}
\item[(i)]
Let $A \in \CN^{L\times N}$ such that every $x\in\Sigma^{N}_{k}$ can be recovered from the measurements $y = A x \in \CN^{L}$ as a solution of
$\min_{z\in\CN^{N}} \|z\|_{0}$ subject to $A z = y$.
\item[(ii)] Let $\Psi = \{\psi_m\}^{M}_{m=1}$ be a set of $M$ vectors in $\CN^{L}$
such that the mapping $\mathcal{A}_{\Psi} : \CN^{L}\backslash\TT \to \RN^{M}$ is injective.
\end{itemize}
Therewith, we define the measurement vectors 
\begin{equation}
\label{equ:Amatrix}
	\varphi_{m} := A^{*} \psi_{m}\in \CN^{N}\,, \quad
	m=1,\dots, M\;.
\end{equation}
By this construction of measurement vectors, one guarantees that every $k$-sparse vector in $\CN^{N}$ can be recovered from the measurements \eqref{equ:measurements}, provided the number $M$ of measurements is large enough.

% ----- Theorem : sufficient number of measurements -----
\begin{theorem}
\label{thm:NumbMeasure}
If $M \geq 8 k - 4$ then there exist sets of measurement vectors $\{ \varphi_{m} \in \CN^{N} \}^{M}_{m=1}$ such that every $x \in \Sigma^{N}_{k}$ can be recovered from the quadratic measurements \eqref{equ:measurements}, up to unitary factor.
\end{theorem}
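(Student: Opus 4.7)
The plan is to realize the general two-step scheme described in the preceding paragraph, and to choose the intermediate dimension $L$ as small as possible so that the two ``budgets'' (sparse recovery, then phase retrieval on the compressed data) combine to give exactly $8k-4$.

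First, I would pick $L=2k$ and any matrix $A\in\CN^{2k\times N}$ whose every $2k$ columns are linearly independent, for instance a Vandermonde matrix with $2k$ pairwise distinct nodes, or the first $2k$ rows of a DFT matrix. Such an $A$ has the property that for any two $k$-sparse vectors $z_1,z_2$ the difference $z_1-z_2$ is $2k$-sparse, and $A(z_1-z_2)=0$ forces $z_1=z_2$. Hence requirement~(i) holds: the $\ell_0$-minimization $\min\|z\|_0$ subject to $Az=y$ recovers any $k$-sparse preimage uniquely.

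Second, I would invoke the existing non-sparse phase retrieval results in the complex Euclidean space $\CN^L=\CN^{2k}$: by~\cite{Conca_Algebraic13,Bodmann_StablePR2014,Fickus_VeryFewMeasurements}, there exist sets $\Psi=\{\psi_m\}_{m=1}^{M}\subset\CN^{2k}$ of cardinality $M=4L-4=8k-4$ that render $\mathcal{A}_\Psi:\CN^{2k}\!\setminus\!\TT\to\RN^{M}$ injective, so requirement~(ii) is met. Define $\varphi_m:=A^*\psi_m\in\CN^N$ as in~\eqref{equ:Amatrix}.

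The verification is then a one-line adjoint computation. For each $m$,
\begin{equation*}
\langle x,\varphi_m\rangle_{\CN^N}=\varphi_m^{*}x=\psi_m^{*}A x=\langle Ax,\psi_m\rangle_{\CN^{L}},
\end{equation*}
so the intensities satisfy $b_m=|\langle Ax,\psi_m\rangle|^{2}$. By~(ii) the vector $y:=Ax\in\CN^{2k}$ is therefore determined by $\{b_m\}_{m=1}^{M}$ up to a global unimodular factor; i.e.\ we recover some $\widetilde y=c\,Ax=A(cx)$ with $|c|=1$. Since $cx$ is again in $\Sigma^N_k$, property~(i) applied to the linear system $Az=\widetilde y$ returns $cx$, which is exactly $x$ modulo $\TT$. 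I do not see a genuine obstacle here: the only delicate point is that the ``unknown'' global phase passed out by the phase retrieval stage is harmless because the sparse set $\Sigma^N_k$ is closed under multiplication by unit scalars, so the ambiguity propagates cleanly through $A^{-1}$ on the support. Counting yields $M=8k-4$ measurement vectors, and enlarging $\Psi$ to any $M\ge 8k-4$ trivially preserves injectivity, completing the argument.
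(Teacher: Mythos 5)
Your proof is correct and follows essentially the same route as the paper: compress with a full-spark matrix $A\in\CN^{2k\times N}$ (the paper cites \cite{CohenDahmenDeVore_09} for existence where you give explicit Vandermonde/partial-DFT constructions), apply the $4L-4$ injectivity result of \cite{Conca_Algebraic13} in $\CN^{2k}$, and pull back via $\varphi_m=A^*\psi_m$. Your explicit remark that the global phase ambiguity propagates harmlessly because $\Sigma^N_k$ is closed under unimodular scaling is a small but welcome addition that the paper's proof leaves implicit.
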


\begin{proof}
Let $x \in \Sigma^{N}_{k}$.
By the definition of the measurement vectors ${\varphi_{m}}$ in \eqref{equ:Amatrix}, we can write \eqref{equ:measurements} as
\begin{equation}
\label{equ:measurement}
	b_{m}
	= \left|\left\langle x, A^{*}\psi_{m}\right\rangle_{\CN^{N}}\right|^{2}
	= \left|\left\langle A x, \psi_{m}\right\rangle_{\CN^{L}}\right|^{2}
	= \left|\left\langle y, \psi_{m}\right\rangle_{\CN^{L}}\right|^{2}
\end{equation}
with $y = A x \in \CN^{L}$.
It is known \cite{Conca_Algebraic13} that if $M \geq 4L-4$ than there are a sets of measurement vectors $\{ \psi_{m} \in \CN^{L} \}^{M}_{m=1}$ which have property (ii).
It follows that $y\in\CN^{L}$ can be determined from the magnitude measurements $\{b_{m}\}^{M}_{m=1}$ given in \eqref{equ:measurements}, up to a unitary constant.
Moreover, if $L \geq 2 k$ then it is known \cite{CohenDahmenDeVore_09} that there exist matrices $A \in \CN^{L\times M}$ which have the property (i).
Consequently the $k$-sparse vector $x$ can be reconstructed from $y$.
\end{proof}

\begin{remark}
The previous result and the proof are similar to \cite{LiVoro_2012}.
However, our approach gives immediately explicit constructions of measurement systems as well as corresponding recovery algorithms.
In particular, there exist explicit constructions for matrices $A$ which have property (i) \cite{RauhutBook}, and there exist several known systems of vectors which have property (ii) \cite{Bodmann_StablePR2014,Fickus_VeryFewMeasurements}.
\end{remark}

The number of necessary measurements, given in Theorem~\ref{thm:NumbMeasure} is based on the known results on the minimal number of measurements necessary for the phase retrieval step and compressive sensing step.
To get stable recovery algorithms, one may need more measurements than required in Theorem~\ref{thm:NumbMeasure}.
Following the described methodology, one has to choose concrete realizations for  $A$ and $\Psi$ and different algorithms for both recovery steps. For example,
\begin{itemize}
\item Choose $\Psi$ as random vectors as in~\cite{Candes_PhaseLift} and use {\em PhaseLift} for the recovery in the phase retrieval step.
\item Pick $A$ as a random matrix and then solve the {\em basis pursuit} problem in step $2$. 
\end{itemize}

%The next two section, provide two concrete realizations of the general approach.

%by \emph{basis pursuit} \cite{Chen_BasisPursuit}, i.e. by the $\ell_{1}$-minimization problem, 
%\begin{equation}
%\label{equ:l1}
	%\min_{x\in\CN^{N}} \|x\|_{\ell_{1}}\quad
	%\text{subject to}\quad
	%Ax = y\;.
%\end{equation}

%\vfill
%\pagebreak

\section{CPR -- Gaussian Measurements}

In the following, we will give a concrete realization of the previously introduced methodology to sparse phase retrieval which yields a low complexity recovery algorithm.
To this end, we have to choose a matrix $A$ with property (i) and vectors $\{\psi_{m}\}$ with property (ii).
For the set $\{\psi_{m}\}$, we use the vectors proposed in \cite{PYB_SampTa14}:

\paragraph*{A set of measurement vectors}
Consider the set of $M = 4L - 4$ measurement vectors $\Psi = \{ \psi_{s,l} \}^{s=1,\dots,4}_{l=1,\dots,L-1}$ in $\CN^{L}$ given by:
%\begin{equation}
%\label{equ:MeasVect}
%\begin{array}{rclcrcl}
	%\psi_{1,l}  =  a_{1}[1]\, e_{1} + a_{1}[2]\, e_{l+1}, \quad  \psi_{2,l}  =  a_{2}[1]\, e_{1} + a_{2}[2]\, e_{l+1},\\[0.5ex]
	%\psi_{3,l}  =  a_{3}[1]\, e_{1} + a_{3}[2]\, e_{l+1}   ,\quad   \psi_{4,l}  =  a_{4}[1]\, e_{1} + a_{4}[2]\, e_{l+1},
%\end{array}
%\end{equation}
\begin{equation}
\label{equ:MeasVect}
%\begin{array}{rclcrcl}
\begin{array}{rcr}
	\psi_{1,l}  =  \alpha\, e_{1} + \beta\, e_{l+1}, & \quad  &\psi_{3,l}  =  \phantom{-}\alpha\, e_{1} - \beta\, e_{l+1},\\[0.5ex]
	\psi_{2,l}  =  \beta\, e_{1} + \alpha\, e_{l+1}, & \quad  & \psi_{4,l}  =  -\beta\, e_{1} + \alpha\, e_{l+1},
\end{array}
\end{equation}
where $\{e_{l}\}^{L}_{l=1}$ is the canonical orthonormal basis in $\CN^{L}$,
%with the only non-zero entry of $e_{l}$ at the $l$th position.
%The constants are given as,
%\begin{equation*}
%\label{equ:alpha_2D}
	%a_{1} =  \binom{\alpha}{\beta},\
	%a_{2} =  \binom{\beta}{\alpha},\
	%a_{3} =  \binom{\alpha}{-\beta},\
	%a_{4} =  \binom{-\beta}{\alpha},
%\end{equation*}
and with
\begin{equation}
\label{equ:AlphaBeta}
	\alpha = \sqrt{\tfrac{1}{2}\left( 1-\tfrac{1}{\sqrt{3}} \right)}
	\quad\text{and}\quad
	\beta = \E^{-\I 5\pi/4}\sqrt{\tfrac{1}{2}\left( 1+\tfrac{1}{\sqrt{3}} \right)}.
\end{equation}
It was shown in \cite{PYB_SampTa14} that the mapping $\mathcal{A}_{\Psi} : \CN^{L}\backslash\TT \to \RN^{4L-4}$ associated with the set $\Psi$
%$\mathcal{A}_{\Psi}(y) = \{\ |\left\langle y , \psi_{s,m} \right\rangle|^{2} : \psi_{s,m}\in\Psi\ \}$
is injective on the subspace $\mathcal{S}_{\Psi} = \{ y \in \CN^{L}/\uC\ :\ y[1]\neq 0\}$.
%and every $y \in \mathcal{S}_{\Psi}$ can be recovered from the measurements $\mathcal{A}_{\Psi}(y)$~\cite{PYB_SampTa14}.

\begin{theorem}
\label{th:1}
Let $A \in \CN^{L\times N}$ be a Gaussian or Bernoulli random matrix, and set
\begin{equation*}
	\varphi_{s,l} := A^{*} \psi_{s,l}\in \CN^{N}\,,
	\quad s=1,\dots,4,\ l=1,\dots,L-1\;,
\end{equation*}
with the vectors $\psi_{s,l}$ defined in \eqref{equ:MeasVect}.
%and let $\Psi = \{ \psi_{s,l} \}^{s=1,\dots,4}_{l=1,\dots,L-1}$ be the vectors defined in (\ref{equ:MeasVect}),
%and set $\varphi_{s,m} := A^{*} \psi_{s,m}\in \CN^{N}$.
%Let $\mathcal{V} = \lbrace x \in \CN^{N}: \|x\|_{0} \leq k, (Ax)[1] \neq 0 \rbrace$.
If $M = 4L-4 \gtrsim 8 k \ln(N/k)$, then every 
$x \in \mathcal{V} := \{ x \in \Sigma^{N}_{k} : (Ax)[1] \neq 0\}$
%If $M \gtrsim 4k\log(N/k)$ then every $x \in \mathcal{V}$
can be recovered (up to a global phase) from the intensity measurements
\begin{equation*}
	b_{s,l} = \left|\left\langle x , \varphi_{s,l}\right\rangle\right|^{2}\,,
	\quad s=1,\dots,4 \,,
		\quad l=1,\dots,L-1.
\end{equation*}
with high probability.
\end{theorem}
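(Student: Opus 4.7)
The plan is to mirror the two-step decomposition already used in the proof of Theorem~\ref{thm:NumbMeasure}, but now tracking where randomness enters and invoking a quantitative compressed-sensing guarantee rather than the bare dimension count $L \geq 2k$.

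First, I would rewrite the intensity measurements exactly as in \eqref{equ:measurement}, setting $y := A x \in \CN^{L}$, so that
\begin{equation*}
    b_{s,l} = \left|\left\langle A x, \psi_{s,l}\right\rangle_{\CN^{L}}\right|^{2} = \left|\left\langle y, \psi_{s,l}\right\rangle_{\CN^{L}}\right|^{2}.
\end{equation*}
The hypothesis $x \in \mathcal{V}$ guarantees $y[1] \neq 0$, so $y$ lies in the subspace $\mathcal{S}_{\Psi}$ on which the map $\mathcal{A}_{\Psi}$ from \cite{PYB_SampTa14} is injective. Consequently, the algebraic reconstruction algorithm recovers $y$ exactly, up to a unimodular global factor. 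This step is deterministic once $A$ is fixed and uses no properties of $A$ at all.

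Second, having $y = A x$ (modulo the global phase, which just gets absorbed into the global phase ambiguity of $x$), the task reduces to solving the $k$-sparse recovery problem $\min_{z} \|z\|_{\ell_{1}}$ subject to $A z = y$. Here is where the bound on $M$ enters: the condition $M = 4L-4 \gtrsim 8 k \ln(N/k)$ is equivalent to $L \gtrsim 2 k \ln(N/k)$, which is precisely the regime in which a Gaussian or Bernoulli random matrix $A \in \CN^{L \times N}$ satisfies the restricted isometry property of order $2k$ (with a suitable constant) with high probability; see, e.g., the standard results cited in \cite{RauhutBook}. Under this RIP condition, basis pursuit recovers every $x \in \Sigma^{N}_{k}$ exactly from $y = A x$.

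Combining the two steps on the event that $A$ satisfies the RIP gives exact recovery of $x$ from $\{b_{s,l}\}$ up to the unavoidable global phase, and the probability of this event tends to one as $M$ grows. The only subtlety I expect is bookkeeping the two phase ambiguities: the algebraic step returns $\E^{\I\theta} y$ rather than $y$, but the subsequent $\ell_{1}$-minimization returns $\E^{\I\theta} x$, so the ambiguity is consistently absorbed into the global phase of the final estimate. The condition $(Ax)[1]\neq 0$ is needed only to place $Ax$ inside $\mathcal{S}_{\Psi}$, and for Gaussian or Bernoulli $A$ this holds almost surely for any fixed nonzero $x$, so it imposes no effective restriction beyond $x \neq 0$.
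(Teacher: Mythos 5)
Your proposal is correct and follows essentially the same two-step argument as the paper: rewrite the measurements as $b_{s,l}=|\langle Ax,\psi_{s,l}\rangle|^{2}$, use the injectivity of $\mathcal{A}_{\Psi}$ on $\mathcal{S}_{\Psi}$ (valid since $(Ax)[1]\neq 0$) to recover $y=Ax$ up to a global phase, and then invoke a standard compressed-sensing guarantee for Gaussian/Bernoulli matrices in the regime $L\gtrsim 2k\ln(N/k)$. The paper phrases the second step via the null space property where you use the RIP, but these are the same standard route; your added remarks on consistently absorbing the phase ambiguity and on $(Ax)[1]\neq 0$ holding almost surely are correct refinements that the paper leaves implicit.
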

\begin{proof}
As in Theorem~\ref{thm:NumbMeasure}, by the definition of the measurement vectors, we have $b_{s,l} = | \left\langle y,\psi_{s,l} \right\rangle_{\CN^{L}} |^{2}$ with $y := A x \in \CN^{L}$.
Since $\mathcal{A}_{\Psi} : \CN^{L}\backslash\TT \to \RN^{4L-4}$ is injective on $\mathcal{S}_{\Psi}$ and $y[1]\neq 0$, $y$ can be reconstructed from $\{b_{s,l}\}$, up to a global phase.
By the assumption of the theorem $L > 2 k \ln(N/k)$. Therefore, standard CS theory guarantees \cite{RauhutBook} that $A$ has the null space property (with high probability) and so $x \in \Sigma^{N}_{k}$ can be recovered from the linear measurements $y = A x \in \CN^{L}$.
\end{proof}
%After the previous step, the problem reduces to a compressive sensing (CS) task. Assuming that $A \in \CN^{M\times N}$ satisfies the null space property of order $k$, we reconstruct $x$ by $\ell_{1}$-minimization. Theory also suggests that a $k$-sparse signal is perfectly recovered with very high probability via $\ell_{1}$-minimization whenever $\mathcal{O}(k\log{}N/k)$ random measurements (e.g., i.i.d. Gaussian, Bernoulli) are available. In the case of Fourier measurements, $M \gtrsim k\log(N)$ random frequency measurements are known to suffice~\cite{RUP}. 

\begin{remark}
The estimate for the necessary number of measurements $M$ and the notion "with high probability" can be made precise using well known results from CS, see, e.g., \cite{RauhutBook}.
\end{remark}

\begin{remark}
Note also, that our construction provides a natural recovery algorithm.
In the first step, $y = Ax$ is determined from the measurements $\{b_{s,l}\}$ using the algorithm proposed in \cite{PYB_SampTa14}.
Afterwards the $k$-sparse vector $x$ can be determined from $y$ by any algorithm known for sparse signal recovery.
For concreteness, we assume that \emph{basis pursuit} is used, i.e., $x \in \Sigma^{N}_{k}$ is the unique solution of the 
following convex minimization problem:
\begin{equation}
\label{equ:l1F}
	\min_{z\in\CN^{N}} \|z\|_{\ell_{1}}\quad
	\text{subject to}\quad
	A z = y\;.
\end{equation}
\end{remark}

% ------------------------------------------------
\section{CPR -- Fourier Measurements}
\label{sec:Fourier}

In many applications, the matrix $A$ in \eqref{equ:Amatrix} cannot be determined arbitrarily. Here, we adapt our idea from last section to the setup in Fig.~\ref{fig:Optics}. In this setting, the object of interest is illuminated and the resulting diffraction pattern $x[n]$ is modulated by suitable masks with transmittance functions $p_{s}[n]$, such that $x\odot p_{s}$ is the resulting signal after each mask. Subsequently, the lens transforms the modulated signal into the frequency domain. As we are interested in recovering spatially sparse signals $x\in \Sigma^{N}_{k}$, we exploit this sparsity and take only random frequency measurements $M < N$, where $M$ is determined by the compressive sensing theory. 

In particular, we propose to use four masks with the following transmittance functions
%\begin{equation}
%\label{equ:Masks}
%\begin{array}{rcr}
	%p_{1}[n]  =  \overline{\alpha}\, \delta[n] + \overline{\beta}, & \quad  & p_{3}[n]  =  \phantom{-}\overline{\alpha}\, \delta[n] - \overline{\beta},\\[0.5ex]
	%p_{2}[n]  =  \overline{\beta}\, \delta[n] + \overline{\alpha}, & \quad  & p_{4}[n]  =  -\overline{\beta}\, \delta[n] + \overline{\alpha},
%\end{array}
%\end{equation}
\begin{equation}
\label{equ:Masks}
\begin{array}{rcr}
	p_{s}[n]  =  a_{s}\, \delta[n] + b_{s}\;,\quad n = 1,\dots,N\,
\end{array}
\end{equation}
%where $n=1,\dots,N$,
where $\delta[n]$ stands for the delta function defined by $\delta[1] = 1$ and $\delta[n] = 0$ for $n\neq 1$, and with the constants
\begin{equation*}
\begin{array}{llll}
	a_{1} = \alpha\,,\ & a_{2} = \beta\,,\ & a_{3} = \phantom{-}\alpha\,,\ & a_{4} = -\beta\,,\\
	b_{1} = \beta\,,\ & b_{2} = \alpha\,,\ & b_{3} = -\beta\,,\ & b_{4} = \phantom{-}\alpha\,,
\end{array}
\end{equation*}
and where $\alpha$ and $\beta$ are defined as in \eqref{equ:AlphaBeta}.
Based on these masks, we can prove the following recovery result.

%Our measurements will then be $|F^p D_s x|^2$ where $F^p \in \CN^{M\times N}$ stands for the partial DFT matrix (randomly selected $M$ rows of the $N\times N$ DFT matrix). Note that each mask could also be represented by a diagonal matrix with the entries of the corresponding transmittance function on the diagonal, s.t. $x\odot p_s=D_s x$. 

\begin{theorem}
Consider the measurement setup of Fig.~\ref{fig:Optics} with the four masks $p_s \in \CN^{N}$ as defined in \eqref{equ:Masks}.
Let $\mathcal{L} \subset \{1,\dots,N\}$ be a set of randomly chosen sampling points in the Fourier domain.
If
\begin{equation*}
	L = |\mathcal{L}| > C\, k \log(N), 
\end{equation*}
with an appropriated constant $C$, then every $x \in \Sigma^{N}_{k}$ with $x[1] \neq 0$ can be recovered (up to an global unitary phase) from the $4 L$ intensity measurements
\begin{equation*}
	b_{s,l} = |\mathcal{F}(x \odot p_s)[l]|^{2}\,,
	\quad s=1,\dots,4\,,\ l\in \mathcal{L},
\end{equation*}
with high probability.
\end{theorem}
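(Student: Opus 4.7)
The plan is to reduce the $4L$ Fourier-intensity measurements to the algebraic phase retrieval problem of \cite{PYB_SampTa14} (which provided the vectors of \eqref{equ:MeasVect}), and then to reconstruct $x$ from the partial Fourier samples that are thereby recovered using the standard compressive sensing guarantee for random partial DFT observations. The four masks of \eqref{equ:Masks} have been chosen precisely so that the coefficient pair $(a_{s},b_{s})$ of the $s$th mask coincides with the coefficient pair of $(e_{1},e_{l+1})$ in $\psi_{s,l}$.

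First I would compute the effect of a single mask. Because $p_{s}[n]=a_{s}\delta[n]+b_{s}$, one has $x\odot p_{s}=a_{s}\,x[1]\,e_{1}+b_{s}\,x$, and since the unitary DFT of $e_{1}$ is the constant $1/\sqrt{N}$, this gives
\[
 \mathcal{F}(x\odot p_{s})[l] \;=\; \frac{a_{s}}{\sqrt{N}}\,x[1]\;+\;b_{s}\,\hat{x}[l].
\]
Enumerate $\mathcal{L}=\{l_{1},\dots,l_{L}\}$ and introduce the auxiliary vector $y\in\CN^{L+1}$ defined by $y[1]=\overline{x[1]}/\sqrt{N}$ and $y[i+1]=\overline{\hat{x}[l_{i}]}$ for $i=1,\dots,L$. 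Using $|w|=|\overline{w}|$ together with $\alpha\in\RN$, a short calculation identifies
\[
 b_{s,l_{i}} \;=\; |\mathcal{F}(x\odot p_{s})[l_{i}]|^{2} \;=\; |\langle y,\psi_{s,i}\rangle|^{2}
\]
for every $s\in\{1,\dots,4\}$ and $i\in\{1,\dots,L\}$, with the very same vectors $\psi_{s,i}\in\CN^{L+1}$ constructed in \eqref{equ:MeasVect}.

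With this identification in hand the remainder is assembly. The hypothesis $x[1]\neq 0$ forces $y[1]\neq 0$, so the injectivity of $\mathcal{A}_{\Psi}$ on $\mathcal{S}_{\Psi}$ quoted from \cite{PYB_SampTa14} applies with ambient dimension $L+1$, and the $4L=4((L+1)-1)$ measurements $\{b_{s,l}\}$ determine $y$, hence the partial Fourier vector $(\hat{x}[l])_{l\in\mathcal{L}}$, up to a single global unimodular factor. For the final step I would invoke the classical $\ell_{1}$-recovery guarantee for random partial DFT measurements: once $L\geq C\,k\log N$, with high probability the restricted partial DFT operator $F_{\mathcal{L}}$ enjoys the restricted isometry / null space property on $\Sigma^{N}_{k}$, so that $x$ is the unique solution of $\min\|z\|_{\ell_{1}}$ subject to $F_{\mathcal{L}}\,z=(\hat{x}[l])_{l\in\mathcal{L}}$ \cite{RauhutBook}; feeding it the right-hand side produced by the phase-retrieval step returns $x$ up to the same global phase. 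The piece I expect to require the most care is the reinterpretation step itself: the mask coefficients enter $\mathcal{F}(x\odot p_{s})[l]$ without complex conjugation while the standard inner product on $\CN^{L+1}$ conjugates them, so one has to pass to $\overline{y}$ rather than $y$ (or, equivalently, to the conjugated measurement vectors $\overline{\psi_{s,i}}$) in order to align the two measurement models. Once this bookkeeping is in place, the theorem is a direct composition of the two cited results.
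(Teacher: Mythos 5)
Your proposal is correct and follows essentially the same route as the paper: rewrite each masked Fourier sample as an affine combination of $x[1]/\sqrt{N}$ and $\hat{x}[l]$, identify the intensities with $|\langle\cdot,\psi_{s,l}\rangle|^{2}$ for the vectors of \eqref{equ:MeasVect} in $\CN^{L+1}$, invoke the injectivity of $\mathcal{A}_{\Psi}$ on $\mathcal{S}_{\Psi}$, and finish with the partial-DFT compressive sensing guarantee. Your explicit handling of the conjugation mismatch (passing to $\overline{\widetilde{x}}$ so that the inner-product convention aligns with the mask coefficients) is in fact a point the paper's own write-up glosses over, and it does not affect the conclusion since the recovered conjugated vector determines $\hat{x}|_{\mathcal{L}}$ up to the same global unimodular factor.
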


\begin{remark}
So any $k$-sparse vectors in $\CN^{N}$ can be recovered from $M \gtrsim 4 C k \log(N)$ intensity measurements. 
The constant $C$ and the statement ``with high probability'' can be made more precise using result on CS with partial Fourier measurements \cite{RUP}.
We will show in Sec.~\ref{sec:Simulation}, by means of numerical simulations, that we need approximately $M \gtrsim 4 k \log_2(N/k)$ measurements for recovery with high probability.
\end{remark}

\begin{remark}
Note also, that we have again a very mild restriction on the signal space, namely that the first signal entry $x[1]$ must not vanish.
The restriction is necessary to allow phase retrieval in the first recovery step \cite{PYB_SampTa14}.
\end{remark}

\begin{proof}
Direct calculation shows that
\begin{equation*}
	\widehat{y}_{s} = \mathcal{F}(x \odot p_s) = D_{s} \widetilde{x}
\end{equation*}
where $\widehat{y}_{s} \in \CN^{L}$ is the vector of the Fourier transform of $x \odot p_s$ sampled at the points $\mathcal{L}$, the vector
\begin{equation*}
	\widetilde{x}
	= ( \frac{x[1]}{\sqrt{N}} , \widehat{x}^{\T} )^{\T}
	= ( \frac{x[1]}{\sqrt{N}} , \widehat{x}[l_{1}] , \dots , \widehat{x}[l_{L}])^{\T} \in \CN^{L+1}
\end{equation*}
contains $\frac{x[1]}{\sqrt{N}}$ at the first position and the Fourier transform $\mathcal{F} x$, sampled at the set $\mathcal{L} = \{l_{1},\dots,l_{L}\}$ at the other positions, and $D_{s}\in\CN^{L \times (L+1)}$ are matrices of the form 
\begin{equation*}
D_s = 
	\begin{pmatrix}
	   \overline{a_{s}}&\overline{b_{s}}& 0 &\cdots &\cdots &0\\
	   \overline{a_{s}}&0&\overline{b_{s}}&0&\cdots &0\\
	   \vdots&\vdots&\ddots&\ddots&\ddots&\vdots \\
	   \overline{a_{s}}& 0&\cdots&0&\overline{b_{s}}&0\\
	   \overline{a_{s}}&0 &\cdots & \cdots& 0&\overline{b_{s}}\;.   
	\end{pmatrix}
\end{equation*}
From the simple structure of $D_{s}$, a direct calculation shows that the measurements can be written as
\begin{equation}
\label{equ:measure2}
	b_{s,l} = |\widehat{y}_{s}[l]|^{2} = |\left\langle \widetilde{x} , \psi_{s,l} \right\rangle|\,,
	\quad s=1,\dots,4,\ l=1,\dots,L
\end{equation}
where the set $\Psi = \{ \psi_{s,l} \}$ of $\CN^{L+1}$-vectors is defined as in \eqref{equ:MeasVect}.
Again, we use that the corresponding mapping $\mathcal{A}_{\Psi} : \CN^{L+1}\backslash\TT \to \RN^{4 L}$ is injective.
Consequently, $\widetilde{x} \in \CN^{L+1}$ can be recovered from the measurements \eqref{equ:measure2}, up to a constant phase factor.
Discarding the first entry of $\widetilde{x}$, we obtain in particular $\widehat{x}$ which can be written as
\begin{equation*}
	\widehat{x} = F_{\mathcal{L}} x
\end{equation*}
where $F_{\mathcal{L}} \in \CN^{L \times N}$ stands for the partial DFT matrix with the $L$ rows, indexed by the random set $\mathcal{L}$, of the $N\times N$ DFT matrix $F$.
Since $|\mathcal{L}| > C k \log(N)$, it is known \cite{RUP} that the $k$-sparse vector $x$ can be recovered from $\widehat{x}$ (with high probability) using \eqref{equ:l1F} with $A = F_{\mathcal{L}}$.
\end{proof}
\section{Numerical Simulations}
\label{sec:Simulation}

In this section, we present numerical simulations to support and discuss our theoretical results.
Thereby, we will concentrate us on the setup based on Fourier measurements discussed in Sec.~\ref{sec:Fourier}.
As described, the overall recovery algorithm is based on the algebraic algorithm for PR described in \cite{PYB_SampTa14}, followed by Basis Pursuit \eqref{equ:l1F} which was implemented using SPGL1~\cite{spgl1:2007}.

\begin{figure}[ht]
	\centering
	\raisebox{0.52in}{\rotatebox{90}{empirical success rate}}\,
     \includegraphics[width=2.7in]{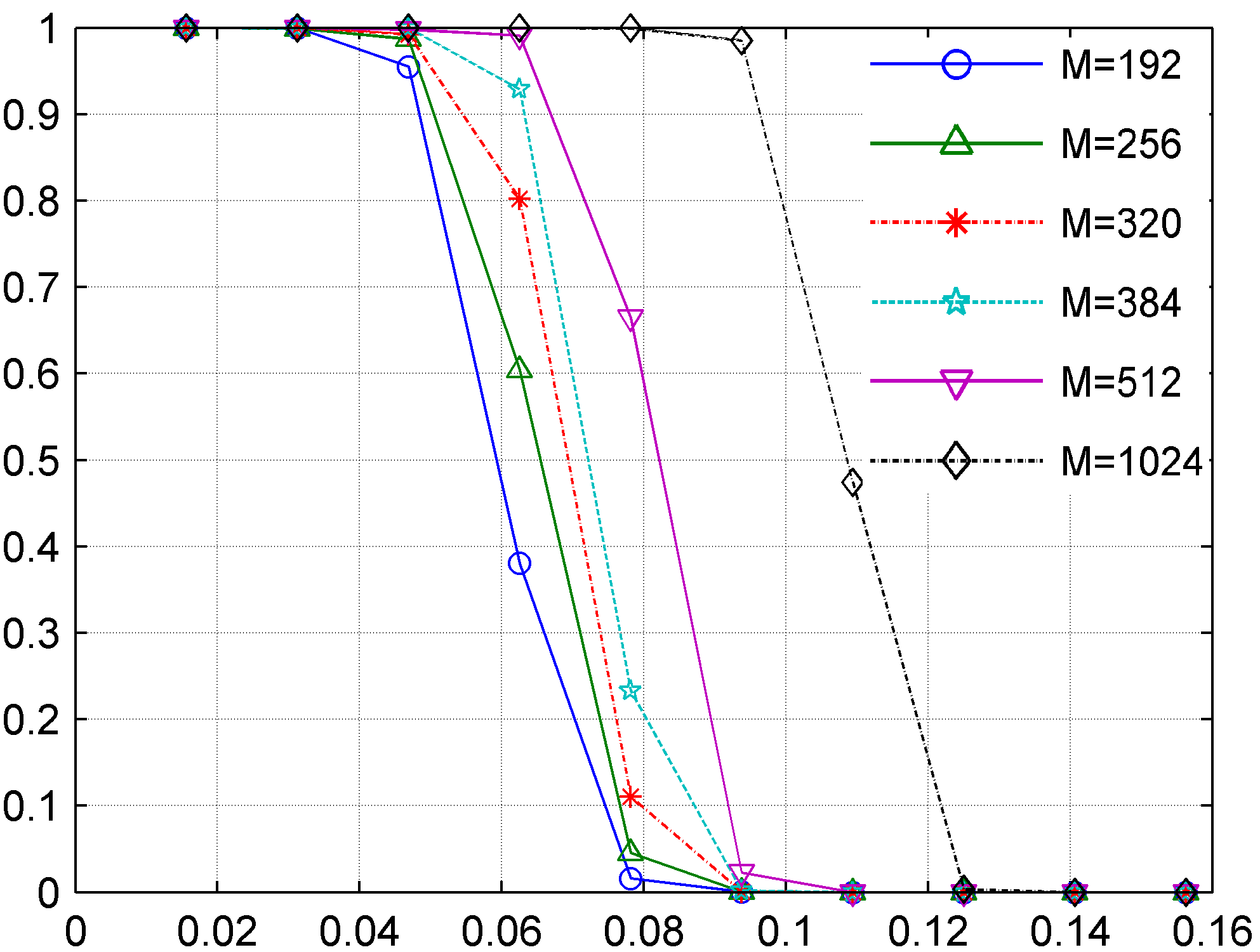}\\
		{ $k/M$}
	\caption{Empirical success rate versus $k/M$ for different number of measurements $M$ and signal dimension $N=512$.}
	\label{fig1}
\end{figure}

Throughout, we used signals $x$ of dimension $N = 512$.
Our test signals are $k$-sparse with a uniformly randomly chosen support with independent, but equally distributed complex Gaussian variables with variance 1.
Measurements are assumed to be disturbed by additive white Gaussian noise
\begin{equation*}
	b_{s,l} = \left| \mathcal{F}(x\odot p_s)[l] + \nu_{s,l} \right|^2\;,
	\quad
	\begin{array}{l}
		s = 1,\dots,4\\
		l = 1,\dots,|\mathcal{L}|\;,	
	\end{array}
\end{equation*}
where $\nu_{s,l} \sim \mathcal{N}(0,\sigma_{\nu}^{2})$ are independent, normally distributed complex random variables with variance $\sigma^{2}_{\nu}$. The signal-to-noise ratio (SNR) is defined as $SNR = \|\mathcal{F}(x\odot p_{s})\|^{2}/E[\|\nu\|^{2}]$.
After we recovered $x$ from the noisy measurements, we determined the relative mean squared error $MSE = \|x - \widetilde{x}\|^{2}_{2}/\|x\|^{2}_{2}$ where $\widetilde{x}$ stands for the estimated signal with the corrected phase.

First, we examined the noiseless case at $SNR = 100$dB.
We performed $2000$ simulations to investigate empirically the number of measurements $M$ which are necessary to recover a $k$-sparse signal.
The results are shown in Fig.~\ref{fig1}. It shows that we need approximately $M \approx 20 k$ measurements for small sparsity values $k\approx 10$, and $M \approx 10 k$ for sparsity values of about $k=100$.

\begin{figure}[ht]
	\centering
	\raisebox{0.95in}{\rotatebox{90}{ $M$}}\,
     \includegraphics[width=2.7in]{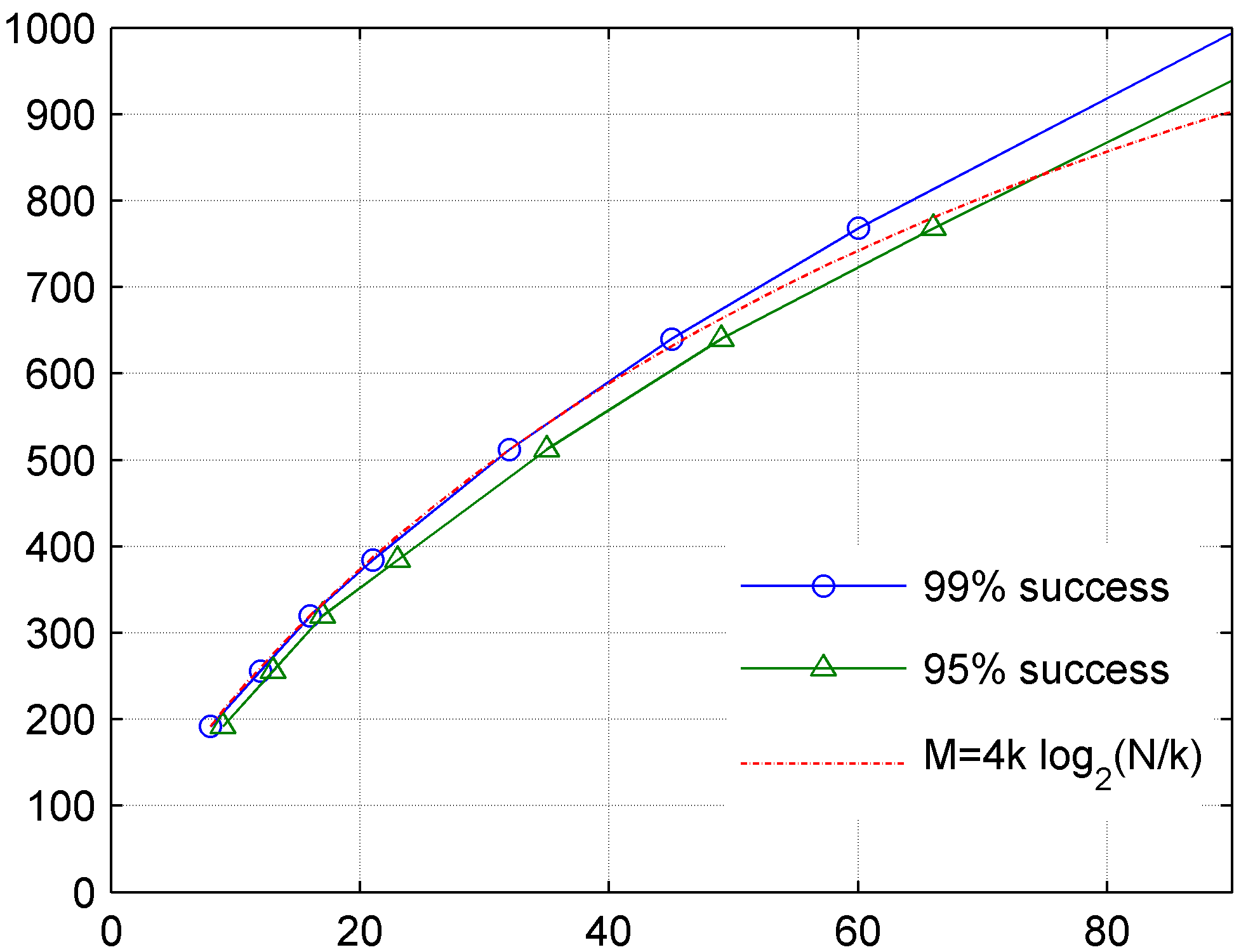}\\
		{ $k$}\\
	\caption{Number of measurements $M$ necessary to recover a $k$-sparse signal $x \in \Sigma^{N}_{k}$ of dimension N=512.}
	\label{fig2}
\end{figure}

To investigate the relation between the necessary number of measurements $M$ and the sparsity $k$ further, Fig.~\ref{fig2} plots $M$ versus $k$ for success rates of $99\%$ and $95\%$, respectively.
Thereby, we regarded a reconstruction as successful whenever the MSE was less than $10^{-5}$.
For small values of $k$, the graphs are well approximated by the relation $M \approx 4 k\log_2 (N/k)$, which is also shown.

\begin{figure}[ht]
	\centering
	\raisebox{.45in}{\rotatebox{90}{ normalized MSE (dB)}}
     \includegraphics[width=2.7in]{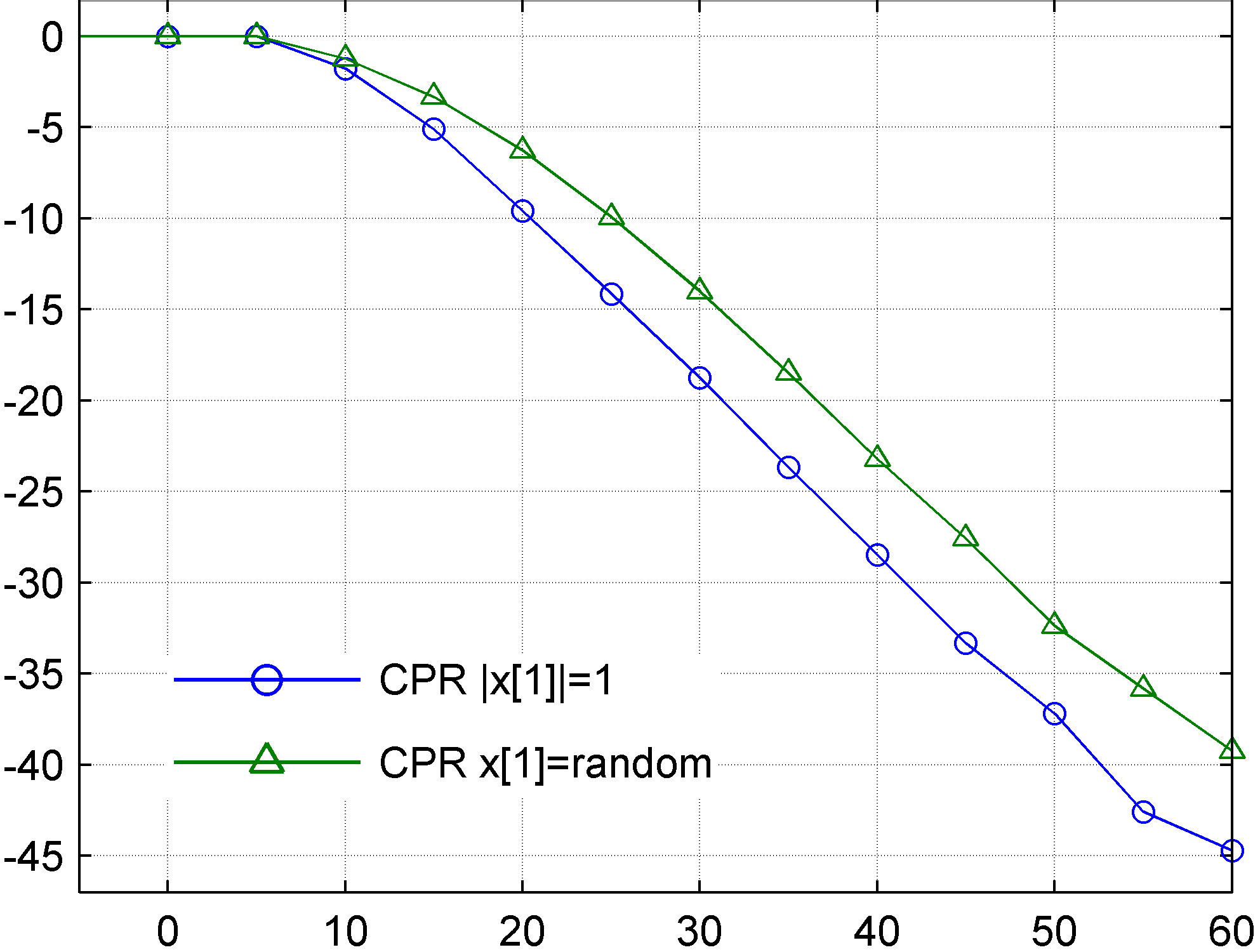}\\
		{ SNR (dB)}
	\caption{Normalized MSE versus SNR for the recovery algorithm proposed in Sec.~\ref{sec:Fourier} for sparse signals $x \in \Sigma^{N}_{k}$ with $N=512$, $k= 12$ and with $M = 256$ intensity measurements.}
	\label{fig3}
\end{figure}

Finally, we studied the stability behavior of the recovery algorithm under additive noise.
Simulation results for signal dimension $N=512$, sparsity $k=12$ and $M = 256 $ measurements are shown in Fig.~\ref{fig3}.
The simulation results were averaged over $10^3$ trials.
In the simulations, we distinguished also between the situation, where the amplitude of the first signal entry was fixed $|x[1]|=1$, and where it was chosen randomly, respectively.
We see that the overall algorithm is stable under additive noise. The simulations show that the reconstruction error $\|x - \widetilde{x}\|^{2}$ is approximately proportional to the norm $\|\nu\|^{2}$ of the additive noise. One obtains a slightly better performance, if the amplitude of the first signal component $x[1]$ is fixed.

\section{Discussion -- Outlook}

The approach of sparse phase retrieval, presented in this paper, is based on a two step recovery procedure: 1) a phase retrieval step, 2) a sparse signal recovery step.
For the first step, we proposed to apply the algebraic algorithm proposed in \cite{PYB_SampTa14}, for the sparse recovery, we proposed to use basis pursuit (BP).
The advantage of this composition, is that the algebraic phase retrieval algorithm has a very low complexity, which only scales linearly with the number $M$ of measurements.
So the overall complexity is mainly determined by the $\ell_{1}$ minimization \eqref{equ:l1F} of basis pursuit which only operates in the dimension $L=M/4$.
Moreover, since both separate algorithms are stable, also the overall sparse phase retrieval is stable. 
The derivations of concrete error bounds is left as a future work. It will be based on the known stability analysis for BP and the \cite{PYB_SampTa14}. 
 
In particular, it was shown that the proposed methodology can also be used to design deterministic masks for practical setups as in Fig.\ref{fig:Optics}, which are based on Fourier measurements.
Our analysis and simulations showed no degradation of the performance for such Fourier measurements, compared to Gaussian measurements, as observed in \cite{prGAMP}.

By the two-staged nature of our recovery methodology, other methods for phase retrieval may be applied as well.
For example, \cite{phasecode} recently proposed masks similar to \eqref{equ:Masks} for phase retrieval, but where only $3$ instead of $4$ masks are needed.
Applying these masks instead of \eqref{equ:Masks} would reduce the overall number of measurements, and it would be interesting to compare the stability behavior with those masks.

\bibliographystyle{IEEEtran}
\bibliography{pub}
%\bibliography{IEEEabrv}

\end{document}